\newtheorem{theorem}{Theorem}
\newtheorem{lemma}{Lemma}
\newcommand{\fig}[1]{\hyperref[fig:#1]{Figure~\ref*{fig:#1}}}
\renewcommand{\sec}[1]{\hyperref[sec:#1]{Section~\ref*{sec:#1}}}
\newcommand{\thm}[1]{\hyperref[thm:#1]{Theorem~\ref*{thm:#1}}}
\newcommand{\lem}[1]{\hyperref[lem:#1]{Lemma~\ref*{lem:#1}}}
\newcommand{\floor}[1]{\lfloor{#1}\rfloor}
\newcommand{\ceil}[1]{\lceil{#1}\rceil}
\newcommand{\N}{\mathbb{N}}
\newcommand{\Z}{\mathbb{Z}}
\newcommand{\defeq}{\colonequals}
\DeclareMathOperator{\enc}{enc}
\DeclareMathOperator{\poly}{poly}
\renewcommand{\>}{\rangle}
\newcommand{\<}{\langle}
\newcommand{\upper}{N}
\newcommand{\sample}{M}
\newcommand{\numpoints}{L}
\newcommand{\neutral}{0}
\begin{document}

\title{Quantum computation of discrete logarithms in semigroups}

\author{Andrew M.\ Childs \\
\normalsize{Department of Combinatorics \& Optimization} \\
\normalsize{and Institute for Quantum Computing} \\
\normalsize{University of Waterloo}
\and
G{\'a}bor Ivanyos \\
\normalsize{Institute for Computer Science and Control} \\
\normalsize{Hungarian Academy of Sciences}}

\date{}

\maketitle

%%%%%%%%%%%%%%%%%%%%%%%%%%%%%%%%%%%%%%%%%%%%%%%%%%%%%%%%%%%%%%%%%%%%%%%%%%%%%%
\begin{abstract}
We describe an efficient quantum algorithm for computing discrete logarithms in semigroups using Shor's algorithms for period finding and discrete log as subroutines.  Thus proposed cryptosystems based on the presumed hardness of discrete logarithms in semigroups are insecure against quantum attacks.
In contrast, we show that some generalizations of the discrete log problem are hard in semigroups despite being easy in groups.  We relate a shifted version of the discrete log problem in semigroups to the dihedral hidden subgroup problem, and we show that the constructive membership problem with respect to $k \ge 2$ generators in a black-box abelian semigroup of order $N$ requires $\tilde \Theta(N^{\frac{1}{2}-\frac{1}{2k}})$ quantum queries.
\end{abstract}

%%%%%%%%%%%%%%%%%%%%%%%%%%%%%%%%%%%%%%%%%%%%%%%%%%%%%%%%%%%%%%%%%%%%%%%%%%%%%%
\section{Introduction}

The presumed difficulty of computing discrete logarithms in groups is a common cryptographic assumption. For example, such an assumption underlies Diffie-Hellman key exchange, ElGamal encryption, and most elliptic curve cryptography. While such cryptosystems may be secure against classical computers, Shor showed that quantum computers can efficiently compute discrete logarithms \cite{Sho97}. Shor originally described an algorithm for computing discrete logs in multiplicative groups of integers, but it is well known that his approach efficiently computes discrete logs in any finite group, provided only that group elements have a unique encoding and that group operations can be performed efficiently.

Here we consider the closely related problem of computing discrete logarithms in finite semigroups. A semigroup is simply a set equipped with an associative binary operation. In particular, a semigroup need not have inverses (and also need not have an identity element). We suppose that the semigroup elements have a unique encoding and that we are able to perform semigroup operations efficiently. In the discrete log problem for a semigroup $S$, we are given two elements $x,g \in S$ and are asked to find the smallest $a \in \N \defeq \{1,2,\ldots\}$ such that $g^a = x$ (or to determine that no such $a$ exists). We write $a = \log_g x$.

At first glance, it may be unclear how a quantum computer could compute discrete logs in semigroups. Shor's discrete log algorithm relies crucially on the function $(a,b) \mapsto g^a x^{-b}$, but $x^{-b}$ is not defined in a semigroup. In fact, hardness of the semigroup discrete logarithm problem has been proposed as a cryptographic assumption that might be secure against quantum computers \cite{KKS13}. The particular scheme described in \cite{KKS13}, based on matrix semigroups, has been broken by a quantum attack \cite{MU12}. However, the algorithm of \cite{MU12} uses a reduction from discrete logs in matrix groups to to discrete logs in finite fields \cite{MW97}, so it does not apply to general semigroups.

Here we point out that in fact quantum computers can efficiently compute discrete logs in any finite semigroup. Our approach is a straightforward application of known quantum tools. The structure of the semigroup generated by $g$ can be efficiently determined using the ability of a quantum computer to detect periodicity, as shown in \sec{structure}. Once this structure is known, an algorithm to compute discrete logs follows easily, as explained in \sec{dlog}.

On the other hand, some problems for semigroups are considerably harder than for groups. In \sec{sdlog}, we consider a shifted version of the discrete log problem in semigroups, namely solving the equation $x=yg^a$ for $a$. This problem appears comparably difficult to the dihedral hidden subgroup problem, even though the corresponding problem in a group can be solved efficiently by computing a discrete log. In \sec{hard}, we consider the problem of writing a given semigroup element as a product of $k \ge 2$ given generators of a black-box abelian semigroup. This problem can also be solved efficiently in groups, whereas the semigroup version is provably hard, requiring $\Omega(N^{\frac{1}{2}-\frac{1}{2k}})$ quantum queries. In fact, this bound is optimal up to logarithmic factors, as we show using the algorithm for the shifted discrete log problem.

After posting a preprint of this work, we learned of independent related work by Banin and Tsaban, who showed that the semigroup discrete log problem can be solved efficiently using an oracle for the discrete log problem in a cyclic group \cite{BT13}.  In particular, this implies a fast quantum algorithm for semigroup discrete log.

%%%%%%%%%%%%%%%%%%%%%%%%%%%%%%%%%%%%%%%%%%%%%%%%%%%%%%%%%%%%%%%%%%%%%%%%%%%%%%
\section{Finding the period and index of a semigroup element}
\label{sec:structure}

Given a finite semigroup $S$, fix some element $g \in S$.  The value
\[
  t \defeq \min\{j \in \N \colon g^j = g^k \text{ for some $k \in j+\N$}\}
\]
is called the \emph{index} of $g$.  The index exists since $S$ is finite.  The value
\[
  r \defeq \min\{j \in \N \colon g^t = g^{t+j}\}
\]
is called the \emph{period} of $g$.  These definitions are illustrated in \fig{rho}.  If $j \ge t$, we say that $g^j$ is in the \emph{cycle} of $g$; if $j<t$, we say that $g^j$ is in the \emph{tail} of $g$.

\begin{figure}
\begin{center}
\begin{tikzpicture}
\tikzstyle{every node}=[font=\small];
\tikzstyle{arrow}=[->, >=latex, shorten >=3pt, shorten <=3pt];
\tikzstyle{noarrow}=[shorten >=3pt, shorten <=3pt];

\coordinate (g1) at (0,0);
\coordinate (g2) at (2,0);
\coordinate (g3) at (4,0);
\coordinate (gtm1) at (6,0);
\coordinate (gt) at (8,0);
\coordinate (center) at ($ (gt) - (0,5/pi) $);
\coordinate (gtp1) at ($ (center) + (90-360/5:5/pi) $);
\coordinate (gtp2) at ($ (center) + (90-2*360/5:5/pi) $);
\coordinate (gtprm2) at ($ (center) + (90-3*360/5:5/pi) $);
\coordinate (gtprm1) at ($ (center) + (90-4*360/5:5/pi) $);

\filldraw (g1) circle (1.5pt) node [above] {$g$};
\filldraw (g2) circle (1.5pt) node [above] {$g^2$};
\filldraw (g3) circle (1.5pt) node [above] {$g^3$};
\filldraw (gtm1) circle (1.5pt) node [above] {$g^{t-1}$};
\filldraw (gt) circle (1.5pt) node [above] {$g^t = g^{t+r}$};
\filldraw (gtp1) circle (1.5pt) node [right] {$g^{t+1}$};
\filldraw (gtp2) circle (1.5pt) node [below right] {$g^{t+2}$};
\filldraw (gtprm2) circle (1.5pt) node [below left] {$g^{t+r-2}$};
\filldraw (gtprm1) circle (1.5pt) node [left] {$g^{t+r-1}$};

\draw[arrow] (g1) -- (g2);
\draw[arrow] (g2) -- (g3);
\draw[noarrow] (g3) -- (4.75,0);
\draw[arrow] (5.25,0) -- (gtm1);
\draw[arrow] (gtm1) -- (gt);

\foreach \x in {-1,0,1} \filldraw (5,0)+(\x*.25,0) circle (.5pt);

\draw[arrow] (gt) arc (90:90-360/5:5/pi);
\draw[arrow] (gtp1) arc (90-360/5:90-2*360/5:5/pi);
\draw[noarrow] (gtp2) arc (90-2*360/5:-90+9:5/pi);
\draw[arrow] ($ (center) + (-90-9:5/pi) $) arc (-90-9:90-3*360/5:5/pi);
\draw[arrow] (gtprm2) arc (90-3*360/5:90-4*360/5:5/pi);
\draw[arrow] (gtprm1) arc (90-4*360/5:90-5*360/5:5/pi);

\foreach \x in {-1,0,1} \filldraw (center)+(\x*9-90:5/pi) circle (.5pt);
\end{tikzpicture}
\end{center}
\caption{\label{fig:rho}
The semigroup generated by $g$.}
\end{figure}
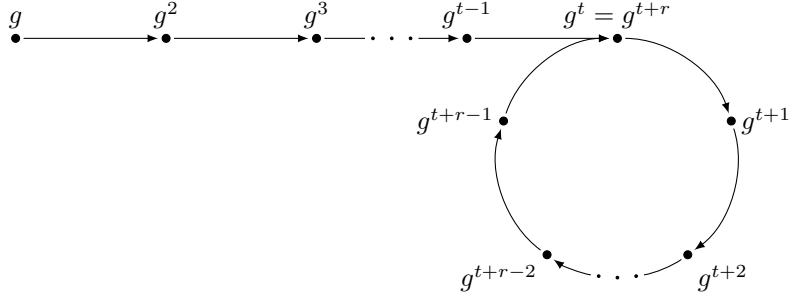

Suppose we are given an upper bound $t+r \le \upper$ on the order of $g$.  For a trivial upper bound, we could take $\upper$ to be the order of the semigroup.  We consider an algorithm to be efficient if it runs in time $\poly(\log\upper)$.  We claim that there is an efficient algorithm to compute $t$ and $r$.

\begin{lemma}\label{lem:indexperiod}
Given a black-box semigroup $S$ and an element $g \in S$, there is an efficient quantum algorithm to determine the index and the period of $g$.
\end{lemma}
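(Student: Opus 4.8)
The plan is to determine the period $r$ first using quantum period-finding, and then to recover the index $t$ by an elementary classical search that uses the now-known value of $r$.

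For the period, the obstacle is that the sequence $g, g^2, g^3, \ldots$ is only \emph{eventually} periodic: the tail introduces collisions near the start that would corrupt the output of Shor's period-finding subroutine. To obtain a genuinely periodic function, I would shift into the cycle by defining $f(a) \defeq g^{a+\upper}$ for integers $a \ge 0$. Since $a + \upper \ge \upper \ge t$, every argument lands in the cycle, so $f(a) = f(b)$ if and only if $a \equiv b \pmod r$, and $f$ takes distinct values on distinct residues modulo $r$. Thus $f$ hides the subgroup $r\Z \le \Z$ with $r \le \upper$, and Shor's algorithm recovers $r$ using $\poly(\log\upper)$ operations. Each evaluation $g^{a+\upper}$ is computed by repeated squaring, costing $O(\log\upper)$ semigroup multiplications and using no inverses or identity element, so the method is valid in a semigroup.

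For the index, I would exploit the monotone predicate $P(j)\colon g^j = g^{j+r}$. Multiplying both sides by $g$ shows that $P(j)$ implies $P(j+1)$, so $P$ switches from false to true exactly once. Moreover $P(t)$ holds by the definition of $r$, while for $j < t$ the equation $g^j = g^{j+r}$ would exhibit a repetition $g^j = g^k$ with $k = j+r > j$, forcing $t \le j$ and contradicting $j < t$; hence $t = \min\{j \in \N \colon P(j)\}$. Since $P$ is monotone and $t \le \upper$, I can locate $t$ by binary search over $\{1,\ldots,\upper\}$, evaluating $P(j)$ by computing $g^j$ and $g^{j+r}$ (again by repeated squaring) and comparing their unique encodings. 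This uses $O(\log\upper)$ evaluations, each costing $O(\log\upper)$ multiplications.

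The main obstacle is thus concentrated entirely in the period-finding step, namely converting the eventually-periodic sequence of powers into a purely periodic function so that Shor's algorithm applies without interference from the tail; the shift by $\upper$ is exactly what removes the tail. Once $r$ is known, extracting $t$ is purely classical and straightforward. Both stages run in $\poly(\log\upper)$ time, which yields the claimed efficient quantum algorithm.
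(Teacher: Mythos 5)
Your proposal is correct and follows the same overall structure as the paper's proof: quantum period finding to obtain $r$, followed by classical binary search on the monotone predicate $g^j = g^{j+r}$ to obtain $t$. The index-finding half is essentially identical to the paper's (the paper's $\gamma$ is just the indicator of your predicate $P$, and your explicit verification that $P(j)\Rightarrow P(j+1)$ and that $P$ first holds at $j=t$ is a nice touch the paper leaves implicit). The one genuine difference is how the tail is handled in the period-finding step. The paper works directly with the superposition $\frac{1}{\sqrt{\sample}}\sum_{j=1}^{\sample}|j\>|g^j\>$ and argues that discarding the second register yields an exactly $r$-periodic state except with probability at most $\upper/\sample$, which is negligible once $\sample>\upper^2+\upper$; your shifted function $f(a)\defeq g^{a+\upper}$ instead guarantees that every exponent lands in the cycle, so $f$ is exactly periodic and injective on residues modulo $r$, and the standard order-finding analysis applies with no extra failure mode. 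Your variant is arguably cleaner, removing one source of error from the analysis at the trivial cost of a constant-factor increase in the size of the exponents computed by repeated squaring. The only small omission is that you should state explicitly that the domain of $f$ has size $\sample=\Omega(\upper^2)$ so that the continued-fraction postprocessing can identify $r$; this is standard in Shor's analysis and does not affect correctness.
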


\begin{proof}
First we find the period, as follows.  Create the state $\frac{1}{\sqrt \sample} \sum_{j=1}^\sample |j\>|g^j\>$ for some sufficiently large $\sample$ (it suffices to take $\sample > \upper^2+\upper$).  Note that we can compute $g^j$ efficiently even for exponentially large $j$ using repeated squaring, so this state can be made in polynomial time.  Next, we discard the second register.  To understand what happens when we do this, suppose we measure the second register.  If we obtain an element in the tail of $g$, then the first register is left in a computational basis state, which is useless.  However, with probability at least $(\sample-t+1)/\sample \ge 1 - \upper/\sample$, we obtain an element in the cycle of $g$, and we are left with an $r$-periodic state
\[
  \frac{1}{\sqrt{\numpoints}} \sum_{j=0}^{\numpoints-1} |x_0+jr\>
\]
for some unknown $x_0 \in \{t,t+1,\ldots,t+r-1\}$, where $\numpoints$ is either $\floor{(\sample-t)/r}$ or $\ceil{(\sample-t)/r}$ (depending on the value of $x_0$).  This is precisely the type of state that appears in Shor's period-finding algorithm. After Fourier transforming this state over $\Z_\sample$ and measuring, we obtain the outcome $k \in \Z_\sample$ with probability
\[
  \Pr(k) 
  = \frac{\sin^2(\frac{\pi k r \numpoints}{\sample})}
         {\numpoints \sample \sin^2(\frac{\pi k r}{\sample})}.
\]
A simple calculation (see for example \cite[Algorithm 5]{CD10}) shows that the probability of obtaining a closest integer to one of the $r$ integer multiples of $\sample/r$ is at least $4/\pi^2$.  By efficient classical postprocessing using continued fractions, we can recover $r$ from such values with constant probability \cite{Sho97}. Since we are in the cycle of $g$ with overwhelming probability, the overall procedure succeeds with constant probability (which could be boosted by standard techniques).

Given the period of $g$, we can find its index by an efficient classical procedure.  Observe that we can efficiently decide whether a given element $g^j$ is in the tail or the cycle of $g$: simply compute $g^r$ by repeated squaring and multiply by $g^j$ to compute $g^{j+r}$.  If $g^{j+r} = g^j$, then $g^j$ is in the cycle of $g$; otherwise it is in the tail of $g$.  Let
\[
  \gamma(g^j) \defeq \begin{cases}
  1 & \text{if $g^{j+r}=g^j$ (i.e., $g^j$ is in the cycle of $g$)} \\
  0 & \text{otherwise (i.e., $g^j$ is in the tail of $g$)}.
  \end{cases}
\]
The value of $t$ is precisely the index of the first $1$ in the list $(\gamma(g),\gamma(g^2),\ldots,\gamma(g^\upper))$.  This list consists of $t-1$ zeros followed by $\upper-t+1$ ones, so we can find $t$ in $O(\log\upper)$ iterations by binary search.
\end{proof}

%%%%%%%%%%%%%%%%%%%%%%%%%%%%%%%%%%%%%%%%%%%%%%%%%%%%%%%%%%%%%%%%%%%%%%%%%%%%%%
\section{Computing discrete logs}
\label{sec:dlog}

We now show how to efficiently compute discrete logarithms in semigroups on a quantum computer.

\begin{theorem}\label{thm:dlog}
Given $x,g \in S$, there is an efficient quantum algorithm to compute $\log_g x$.
\end{theorem}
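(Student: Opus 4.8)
The plan is to combine \lem{indexperiod} with Shor's discrete logarithm algorithm, handling the tail and the cycle of $g$ separately. First I would use \lem{indexperiod} to compute the index $t$ and period $r$ of $g$. The distinct powers of $g$ are then $g^1,\dots,g^{t-1}$ (the tail, each occurring once) together with $g^t,\dots,g^{t+r-1}$ (the cycle), and these two sets are disjoint; so if $x$ is a power of $g$ at all, it lies in exactly one of them. I will locate it in each case and verify the final answer by direct computation, which also lets me report correctly when $x \notin \langle g\rangle$.

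The cycle is where Shor's discrete log enters. Although $S$ need not have inverses, the cycle $\{g^t,\dots,g^{t+r-1}\}$ is itself a cyclic group of order $r$: multiplication by $g$ acts as a shift, $g^{mr}$ (for $mr \ge t$) is its identity, and the inverse of a cycle element $y$ is $y^{r-1}$, all computable by repeated squaring. If $x$ lies in the cycle---detected by the test $xg^r = x$---then $x = g^s$ for a unique $s \in \{t,\dots,t+r-1\}$, and finding $s \bmod r$ is exactly an instance of Shor's discrete log in this cyclic group. Concretely, I would Fourier-sample the period lattice of $(a,b)\mapsto g^{t+a}\,x^{r+b}$ over $\Z_r \times \Z_r$; since this value depends only on $(a+sb)\bmod r$, the hidden lattice is generated by $(r,0)$ and $(-s,1)$, from which $s \bmod r$ is recovered by Shor's classical post-processing. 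The smallest valid exponent is then the representative of $s$ in $\{t,\dots,t+r-1\}$.

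The tail has no periodicity to exploit, and since $t$ may be as large as $\upper$, brute force over the tail is out; this is the step I expect to be the main obstacle. The observation that saves the day is that the cycle-membership predicate of \lem{indexperiod} can be applied to $xg^j$ rather than to $g^j$. Writing $\gamma(y)=1$ exactly when $yg^r = y$, one checks that $\gamma(xg^j)=1$ implies $\gamma(xg^{j+1})=1$ (multiply the relation on the right by $g$ and use that powers of $g$ commute), so $j \mapsto \gamma(xg^j)$ is monotone; moreover $\gamma(xg^t)=1$ always, since $g^{t+r}=g^t$ forces $xg^{t+r}=xg^t$. Hence there is a smallest $j^\ast \le t$ with $\gamma(xg^{j^\ast})=1$, found by binary search in $O(\log\upper)$ steps. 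When $x=g^a$ is in the tail we have $xg^j = g^{a+j}$, which enters the cycle exactly at $j = t-a$, so $a = t-j^\ast$; I would then verify $g^{t-j^\ast}=x$ and return this value as the unique (hence smallest) solution.

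Putting the pieces together: compute $t,r$; if $xg^r=x$ run the cyclic-group discrete log, otherwise binary-search for $j^\ast$ and form the tail candidate $a=t-j^\ast$; in either branch verify $g^a=x$, returning $a$ on success and reporting ``no solution'' otherwise. Each subroutine runs in $\poly(\log\upper)$ time and succeeds with constant probability, and since every output is checked against $x$ the whole algorithm can be repeated to boost the success probability while never returning a wrong answer. The one genuinely new idea beyond \lem{indexperiod} and Shor's algorithms is the monotone cycle-entry search that pins down the tail exponent despite the total absence of group structure there.
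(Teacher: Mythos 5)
Your proposal is correct and follows essentially the same route as the paper: compute $t$ and $r$ via \lem{indexperiod}, binary-search the monotone predicate $\gamma(xg^j)$ to pin down a tail exponent as $t-j^\ast$, and run Shor's discrete log inside the cycle viewed as a cyclic group of order $r$, replacing the unavailable inverses by suitable positive powers. The only (cosmetic) difference is the choice of hiding function---you use $(a,b)\mapsto g^{t+a}x^{r+b}$ over $\Z_r\times\Z_r$, while the paper uses $(a,b)\mapsto x^a g^{(t+s+r-1)b}$ with $s=-t\bmod r$---and both correctly realize a standard Shor-type instance.
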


\begin{proof}
First, we use \lem{indexperiod} to compute the index $t$ and the period $r$ of $g$.  Then we determine whether $x$ is in the tail or the cycle of $g$.  As described in the proof of \lem{indexperiod}, this can be done efficiently by determining whether $xg^r=x$.

If $x$ is in the tail of $g$, then we compute $p$, the smallest positive integer such that $\gamma(xg^p)=1$.  This can be done efficiently by using binary search to find the first $1$ in the list $(\gamma(xg),\gamma(xg^2),\ldots,\gamma(xg^m))$.  Then we can compute $\log_g x = t - p$.

On the other hand, suppose $x$ is in the cycle of $g$.  Then we use the well-known fact (see for example \cite{How95}) that $C \defeq \{g^{t+j}\colon j \in \Z_r\}$ is a group with identity element $g^{t+s}$ where $s = -t \bmod r$.  In fact $C$ is a cyclic group generated by $g^{t+s+1}$; in particular, for $j \ge t$ we have $g^{t+s+1} g^j = g^{j+1}$.  Now we use Shor's discrete log algorithm to compute $\log_{g^{t+s+1}} x$.  While we cannot easily compute the inverse of $x$ in $C$, we know that the inverse of $g^{t+s+1}$ is $g^{t+s+r-1}$, so we can compute the hiding function $f\colon \Z_r \times \Z_r \to C$ with $f(a,b)=x^a g^{(t+s+r-1)b}=x^a (g^{t+s+1})^{-b}$, which suffices to efficiently compute discrete logs in $C$.  Thus we can compute $\log_g x = t + [(s + \log_{g^{t+s+1}} x) \bmod r]$.
\end{proof}

%%%%%%%%%%%%%%%%%%%%%%%%%%%%%%%%%%%%%%%%%%%%%%%%%%%%%%%%%%%%%%%%%%%%%%%%%%%%%%
\section{A shifted version of discrete log}
\label{sec:sdlog}

While the discrete log problem is no harder in semigroups than in groups, some problems that have efficient quantum algorithms in groups are more difficult in semigroups.  In this section, we discuss a shifted version of the discrete log problem that appears to be closely related to the dihedral hidden subgroup problem.

The shifted discrete log problem is as follows: given $x,y,g \in S$, find some $a \in \N$ such that $x=y g^a$.  If $S$ is a group, then this problem reduces to the ordinary discrete log problem, since it suffices to find $a \in \N$ such that $g^a = y^{-1} x$.  However, if $S$ is a semigroup, then the best quantum algorithm we are aware of is the following.

\begin{lemma}\label{lem:sdlog}
Given a black-box semigroup $S$ and elements $x,y,g \in S$, there is a quantum algorithm to find $a \in \N$ such that $x=y g^a$ in time $2^{O(\sqrt{\log|S|})}$.  Furthermore, there is an algorithm using only $\poly(\log |S|)$ quantum queries.
\end{lemma}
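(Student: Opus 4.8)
The plan is to reduce the shifted discrete log problem to the dihedral hidden subgroup problem over a cyclic group of order at most $|S|$, which Kuperberg's algorithm solves in time $2^{O(\sqrt{\log|S|})}$ and which the algorithm of Ettinger and H{\o}yer solves with only $\poly(\log|S|)$ queries. First I would use \lem{indexperiod} to compute the index $t$ and period $r$ of $g$. I would then apply the same period-finding technique to the sequence $y, yg, yg^2, \ldots$ to obtain its period $r'$, and I would locate the index $t'$ of this sequence (the least $a$ with $yg^a = yg^{a+r'}$) by the binary search used in \lem{indexperiod}, testing whether $yg^j$ lies in the cycle via $yg^{j+r}=yg^j$. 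Since $g^{j+r}=g^j$ for $j\ge t$, the sequence $(yg^a)$ is eventually periodic with $r' \mid r$ and $t' \le t \le \upper$, so all of these quantities are computable in polynomial time.

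The key case split is whether a solution $a$ lands in the \emph{tail} ($a < t'$) or the \emph{cycle} ($a \ge t'$) of the $y$-orbit, which I decide by testing whether $xg^{r} = x$. In the tail case the problem is in fact easy: writing $x = yg^a$ gives $xg^j = yg^{a+j}$, so the sequence $(xg^j)_{j\ge 0}$ enters its cycle exactly when $a+j \ge t'$; hence its index equals $t'-a$, which I compute by binary search just as above, and I output $a = t' - (\text{index of the } x\text{-orbit})$. In the cycle case the index of the $x$-orbit is $0$ and carries no information, and this is where the real work lies: I set up the injective hidden-shift instance over $\Z_{r'}$ given by $F(j) \defeq yg^{t'+j}$ and $G(j) \defeq xg^{j}$. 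Using $x = yg^a$ with $a \ge t'$ and periodicity, $G(j) = yg^{a+j} = F\bigl((j + a - t') \bmod r'\bigr)$, so $G$ is a shift of $F$ by the hidden value $c \defeq (a-t') \bmod r'$, and both $F$ and $G$ are injective because the $r'$ cycle elements are distinct. Solving this dihedral HSP recovers $c$, and I output $a = t'+c$, which satisfies $yg^{a} = F(c) = G(0) = x$.

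Both $F$ and $G$ are evaluable in $\poly(\log\upper)$ time by repeated squaring, so plugging them into Kuperberg's algorithm gives the $2^{O(\sqrt{\log|S|})}$-time bound, while the Ettinger--H{\o}yer algorithm gives the $\poly(\log|S|)$-query bound; in either case a final check of $yg^{a} = x$ certifies the answer (and reports that no solution exists if it fails, which also handles the case that $x$ is not in the $y$-orbit at all). I expect the main obstacle to be the bookkeeping that makes the hidden-shift instance genuinely injective: one must work over $\Z_{r'}$ with the true period $r'$ rather than over $\Z_r$, since if $y$ collapses distinct powers of $g$ then $r' < r$ and the naive function on $\Z_r$ would fail to be injective, taking us outside the clean dihedral HSP. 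Isolating $r'$ together with the tail/cycle dichotomy is precisely what reduces the problem to the form Kuperberg's algorithm solves.
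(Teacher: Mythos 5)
Your proposal is correct and follows essentially the same route as the paper: compute the index and period of the eventually periodic sequence $j \mapsto yg^j$, handle the tail case by binary search as in the ordinary discrete log algorithm, and in the cycle case reduce to an injective hidden shift over $\Z_{\tilde r}$ (equivalently, a dihedral hidden subgroup problem) solved by Kuperberg's sieve for the time bound and the Ettinger--H{\o}yer algorithm for the query bound. Your emphasis on using the true period $\tilde r$ of the $y$-orbit rather than the period of $g$ to guarantee injectivity is exactly the point the paper makes as well.
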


\begin{proof}
Similarly to $j\mapsto g^j$, the function $j\mapsto yg^j$ has index
\[
  \tilde t \defeq
  \min\{j \in \N \colon yg^j = yg^k \text{ for some $k \in j+\N$}\}
\]
and period
\[
  \tilde r \defeq
  \min\{j \in \N \colon y g^{\tilde t} = g^{\tilde t + j}\};
\]
we say that $yg^j$ is in the cycle if $j \ge \tilde t$ and in the tail if $j < \tilde t$. The period $\tilde r$ and the index $\tilde t$ can be computed efficiently along the same lines as described in \sec{structure}.

The case where $x$ is in the tail can be treated as in \sec{dlog}. If $x$ is in the cycle, so that $x=yg^{\tilde t+\ell}$ for some nonnegative integer $\ell$, then we must solve a constructive orbit membership problem for a permutation action of the group $\Z_{\tilde r}$ on the set of elements of the form $yg^{{\tilde t}+j}$. Specifically, the action of $j' \in \Z_{\tilde r}$ is multiplication by $g^{j'}$ and we must find the element $\ell\in \Z_{\tilde r}$ transporting $yg^{{\tilde t}}$ to $x$. To this end we consider the efficiently computable function $f\colon \Z_2 \ltimes \Z_{\tilde r} \to S$ with $f(0,j)=yg^{\tilde t+j}$ and $f(1,j)=xg^j$. The function $f(0,j)$ is injective since it has period $\tilde r$. Furthermore, $f(1,j)=xg^j=yg^{\tilde t+\ell+j}=f(0,j+\ell)$, i.e., $f(1,j)$ is a shift of $f(0,j)$ by $\ell$. Therefore, $f$ hides the subgroup $\<(1,\ell)\>$ of the dihedral group $\Z_2 \ltimes \Z_{\tilde r}$ (i.e., it is constant on the cosets of this subgroup and distinct on different cosets). It follows that the Kuperberg sieve~\cite{Kup05} finds $\ell$ (and hence $a=\tilde t+\ell$) in time $2^{O(\sqrt{\log r})}$. Finally, since the dihedral hidden subgroup problem can be solved with only polynomially many quantum queries to the hiding function \cite{EH00}, we can solve the shifted discrete log problem in a black-box semigroup $S$ with only $\poly(\log |S|)$ queries.
\end{proof}

The dihedral hidden subgroup problem (DHSP) is apparently hard.  Despite considerable effort (motivated by a close connection to lattice problems \cite{Reg04}), Kuperberg's algorithm remains the best known approach, and it is plausible that there might be no efficient quantum algorithm. Note that the DHSP can be reduced to a quantum generalization of the constructive orbit membership problem, namely, orbit membership for a permutation action on pairwise orthogonal quantum states \cite[Proposition 2.2]{FIMSS03}. Thus, intuitively, a solution of the shift problem for a (classical) permutation action (such as in the shifted discrete log problem) should exploit that the action is on classical states, unless it also solves the DHSP.

In \sec{hard}, we describe another variant of the discrete log problem that is even harder than the shifted discrete log problem, requiring exponentially many queries.  We also show that our lower bound for that problem is nearly optimal using the algorithm of \lem{sdlog} as a subroutine.

%%%%%%%%%%%%%%%%%%%%%%%%%%%%%%%%%%%%%%%%%%%%%%%%%%%%%%%%%%%%%%%%%%%%%%%%%%%%%%
\section{Constructive semigroup membership}
\label{sec:hard}

Given an abelian semigroup with generators $g_1,\ldots,g_k$ and an element $x \in \<g_1,\ldots,g_k\>$, the \emph{constructive membership problem} asks us to find $a_1,\ldots,a_k \in \N_0 \defeq \{0,1,2,\ldots\}$ with $a_1+\cdots+a_k \ge 1$ such that $x = g_1^{a_1} \cdots g_k^{a_k}$.  The notation $g_i^0$ simply indicates that no factor of $g_i$ is present, so solutions with $a_i=0$ for some values of $i$ are well defined even though the semigroup need not have an identity element.

This natural generalization of the discrete log problem is easy for abelian groups (see for example \cite[Theorem 5]{IMS03}). In that case, let $r_i \defeq |\<g_i\>|$ for all $i \in \{1,\ldots,k\}$, $r \defeq |\<g\>|$, and $L \defeq \Z_{r_1} \times \cdots \times \Z_{r_k} \times \Z_r$. The values $(r_1,\ldots,r_k,r)$ can be computed efficiently by Shor's order-finding algorithm \cite{Sho97}. Now consider the function $f\colon L \to G$ defined by $f(a_1,\ldots,a_k,b) = g_1^{a_1} \cdots g_k^{a_k} x^{-b}$. This function hides the subgroup $H \defeq \{(x_1,\ldots,x_k,x) \in L \colon g_1^{x_1} \cdots g_k^{x_k} = g^x\} \le L$, so generators of $H$ can be found in polynomial time \cite{ME99}. To solve the constructive membership problem, it suffices to find the solutions with $x=1 \bmod r$. This corresponds to a system of linear Diophantine equations, so it can be solved classically in polynomial time (see for example \cite[Corollary 5.3b]{Sch86}).

Here we show that the constructive membership problem in semigroups is considerably harder.  Specifically, given a black-box semigroup $S$, we need exponentially many quantum queries (in $\log |S|$) to solve the constructive membership problem with respect to $k \ge 2$ generators.

\begin{theorem}\label{thm:lower}
Given a black-box semigroup $S = \<g_1,\ldots,g_k\>$ and an element $x \in S$, at least $\Omega(|S|^{\frac{1}{2}-\frac{1}{2k}})$ quantum queries are required to solve the constructive membership problem for $x$ with respect to $g_1,\ldots,g_k$.
\end{theorem}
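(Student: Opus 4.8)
The plan is to exhibit an explicit family of black-box abelian semigroups in which the constructive membership problem is, in disguise, an unstructured search over roughly $|S|^{1-1/k}$ candidates, and then to invoke the optimality of quantum search. Write $m \defeq \ceil{|S|^{1/k}}$, so that $|S| \approx m^k$ and the target bound $|S|^{\frac12 - \frac1{2k}} = m^{(k-1)/2}$ is exactly the cost $\sqrt{M}$ of Grover search over $M \defeq m^{k-1}$ items. This is no accident: the matching upper bound (promised in \sec{hard}) runs Grover search over the first $k-1$ exponents $(a_1,\ldots,a_{k-1}) \in \{0,\ldots,m-1\}^{k-1}$, testing each candidate coset $g_1^{a_1}\cdots g_{k-1}^{a_{k-1}}\<g_k\>$ for membership of $x$ by one call to the shifted discrete log subroutine of \lem{sdlog}; the last exponent $a_k$ is thereby found ``for free'' with only polynomially many queries. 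The lower bound should certify that this search over $m^{k-1}$ cosets is essentially forced.

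Concretely, I would build the semigroup so that $\<g_k\>$ is, in its cycle, a cyclic group of order $\approx m$, that the products $g^{\mathbf b} \defeq g_1^{b_1}\cdots g_{k-1}^{b_{k-1}}$ with $\mathbf b \in \{0,\ldots,m-1\}^{k-1}$ represent $M = m^{k-1}$ pairwise distinct cosets of $\<g_k\>$, and that the whole semigroup is the disjoint union of these cosets together with short tails. A uniformly random secret $\mathbf b^\ast$ determines the instance: $x$ is planted inside the coset of $g^{\mathbf b^\ast}$ (at a position fixing some $a_k^\ast$), but is handed to the algorithm only through an opaque label. Every other element carries a canonical label that openly reveals its reduced exponent vector, so the multiplication oracle can be evaluated among canonical elements with no reference to the secret. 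The only operation coupling $x$ to the canonical part is a multiplication involving the label of $x$ (or of an element previously derived from it); the resulting products are kept opaque, and the sole way to detect a solution is to discover the single coset in which an $x$-derived opaque element coincides with a canonical $g^{\mathbf b}$. One checks that this is a genuine abelian semigroup of order $\approx m^k$ with $x \in \<g_1,\ldots,g_k\>$, carrying exactly the cyclic-group-in-the-cycle structure that makes it a legitimate instance.

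With the family in hand, I would extract the bound by a Bennett--Bernstein--Brassard--Vazirani-style hybrid argument applied directly to the oracle, which is the cleanest way to re-derive Grover optimality in the presence of superpositions. Since the secret coset $\mathbf b^\ast$ is uniform over $M = m^{k-1}$ possibilities and the accessible structure is identical across all secrets except at the single coincidence with $x$, any algorithm that outputs the correct representation with constant probability must accumulate $\Omega(1)$ total query magnitude on the marked coset; a standard hybrid and Cauchy--Schwarz estimate then forces the number of semigroup queries to be $\Omega(\sqrt M) = \Omega(|S|^{\frac12 - \frac1{2k}})$. Equivalently, one may give a reduction turning any $q$-query semigroup algorithm into an $O(q)$-query algorithm for finding the marked item and cite the search lower bound.

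The main obstacle is to make precise---and robust to quantum superposition---the claim that the multiplication oracle leaks no more about $\mathbf b^\ast$ than a search oracle does. I must design the labeling so that (i) multiplication among canonical elements is simulable with no information about the secret, (ii) multiplication involving $x$-derived opaque elements stays opaque and reversible, and (iii) the \emph{only} secret-dependent event is the coset coincidence, contributing at most one search query's worth of information per semigroup query. Controlling this leakage rigorously---essentially a generic-semigroup-model argument---is the technical heart; once the oracle is shown to be secret-oblivious away from the planted coincidence, the hybrid argument delivers the exponent $\frac12 - \frac1{2k}$.
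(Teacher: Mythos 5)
Your target bound and your intuition---that constructive membership with $k$ generators hides an inversion problem over roughly $|S|^{1-1/k}$ candidates---are both right, but the construction you sketch has a genuine gap that the paper's proof is specifically designed to avoid. You want $x$ to carry an ``opaque'' label while every other element ``openly reveals its reduced exponent vector.'' In the black-box model the semigroup elements must have unique encodings, and $x$ \emph{is} the element $g_1^{b_1^\ast}\cdots g_{k-1}^{b_{k-1}^\ast}g_k^{a_k^\ast}$; its label must therefore coincide with the label of that canonical product, which the algorithm can reach by transparent multiplications. Either the canonical labels reveal the exponent vector (and then the label of $x$ gives away $\mathbf{b}^\ast$ with zero queries), or the labels of an entire layer of the semigroup must themselves be scrambled by a hidden bijection---at which point ``keeping the products opaque'' is no longer bookkeeping but the actual mathematical content of the construction. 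The secret-obliviousness of the multiplication oracle, which you defer as ``the technical heart,'' is precisely the part that does not follow from a generic hybrid argument until a concrete, consistent encoding is exhibited.

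The paper resolves this with a different and cleaner device. It takes the nilpotent abelian semigroup of all monomials $g_1^{a_1}\cdots g_k^{a_k}$ with $1\le a_1+\cdots+a_k\le n$ together with an absorbing zero, encodes every element of total degree below $n$ transparently by its exponent vector, and encodes the top layer (degree exactly $n$) via a black-box permutation $\pi$ of the set $\Sigma$ of $(k-1)$-tuples with sum at most $n$, where $|\Sigma|=\Theta(n^{k-1})$ and $|S|=\Theta(n^k)$. Unique encoding holds by construction, the multiplication oracle needs only \emph{forward} calls to $\pi$ (at most one per product, namely when the product lands exactly on the top layer), and constructive membership for a top-layer element with encoding $\sigma$ is exactly the computation of $\pi^{-1}(\sigma)$. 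The lower bound then follows immediately from the known $\Omega(\sqrt{|\Sigma|})$ quantum query bound for inverting a black-box permutation, with no bespoke BBBV or generic-model analysis needed. If you want to salvage your route, replace the informal ``opaque labels'' by an explicit permutation-scrambled encoding of a designated layer and reduce from permutation inversion; that is essentially the paper's proof.
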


\begin{proof}
For any $n \in \N$, consider the abelian semigroup
\[
  S = \{g_1^{a_1} \cdots g_k^{a_k}\colon a_1,\ldots,a_k \in \N_0,\, 1 \le a_1+\cdots+a_k \le n\} \cup \{\neutral\}
\]
with the following multiplication rules:
\begin{align*}
  \neutral (g_1^{a_1} \cdots g_k^{a_k}) &= \neutral \\
  (g_1^{a_1} \cdots g_k^{a_k}) (g_1^{b_1} \cdots g_k^{b_k}) 
  &= \begin{cases}
    g_1^{a_1+b_1} \cdots g_k^{a_k+b_k} &
    \text{if $a_1 + \cdots + a_k + b_1 + \cdots + b_k \le n$} \\
    \neutral & \text{if $a_1 + \cdots + a_k + b_1 + \cdots + b_k > n$}.
  \end{cases}
\end{align*}
Let $\Sigma \defeq \{(a_1,\ldots,a_{k-1}) \in \N_0^{k-1} \colon a_1+\cdots+a_{k-1} \le n\}$. We show that the problem of inverting a black-box permutation $\pi\colon \Sigma \to \Sigma$ (i.e., computing $\pi^{-1}(\sigma)$ for any fixed $\sigma \in \Sigma$ given a black box for $\pi$) reduces to constructive semigroup membership in a black-box version of $S$ with respect to the generators $g_1,\ldots,g_k$.  Since inverting a permutation of $m$ points requires $\Omega(\sqrt{m})$ quantum queries \cite{Amb02}, $|\Sigma| = \Theta(n^{k-1})$, and $|S| = \Theta(n^k)$, this shows that constructive semigroup membership requires $\Omega(\sqrt{n^{k-1}}) = \Omega(|S|^{\frac{1}{2} - \frac{1}{2k}})$ queries.

To construct the black-box semigroup, we specify an encoding 
\[
  \enc\colon S \to
    \{(a_1,\ldots,a_k) \in \N_0^k\colon 1 \le a_1+\cdots+a_k < n\} 
    \cup \Sigma
    \cup \{\neutral\}
\]
defined by
\begin{align*}
  \enc(g_1^{a_1} \cdots g_k^{a_k}) &\defeq (a_1,\ldots,a_k) 
  && \text{if $a_1+\cdots+a_k < n$} \\
  \enc(g_1^{a_1} \cdots g_{k-1}^{a_{k-1}} g_k^{n - a_1 - \cdots - a_{k-1}})
  &\defeq \pi(a_1,\ldots,a_{k-1}) \\
  \enc(\neutral) &\defeq \neutral.
\end{align*}
We can compute $\enc(g h)$ using at most one call to $\pi$ given the encodings $\enc(g),\enc(h)$ of any $g,h \in S$.  Now suppose we can solve the constructive membership problem for some $\sigma \in \Sigma$ with respect to the generators $g_1,\ldots,g_k$ with encodings $(1,0,\ldots,0),\ldots,(0,\ldots,0,1)$.  Then we can find the values $a_1,\ldots,a_{k-1}$ such that $\enc(g_1^{a_1} \cdots g_{k-1}^{a_{k-1}} g_k^{n-a_1-\cdots-a_{k-1}}) = \sigma$, so that $(a_1,\ldots,a_{k-1}) = \pi^{-1}(\sigma)$, thereby inverting $\pi$.
\end{proof}

In fact, for any fixed $k$, this lower bound is nearly tight.

\begin{theorem}\label{thm:loweropt}
Given a black-box semigroup $S = \<g_1,\ldots,g_k\>$ and an element $x \in S$, there is a quantum algorithm to solve the constructive membership problem for $x$ with respect to $g_1,\ldots,g_k$ in time $|S|^{\frac{1}{2} - \frac{1}{2k}+o(1)}$. Furthermore, there is an algorithm using only $|S|^{\frac{1}{2} - \frac{1}{2k}} \poly(\log |S|)$ quantum queries.
\end{theorem}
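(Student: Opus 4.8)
The plan is to match the lower bound of \thm{lower} by combining a meet-in-the-middle strategy over the exponents with the shifted discrete log subroutine of \lem{sdlog}. Writing $x = g_1^{a_1} \cdots g_k^{a_k}$, the key observation is that we do not need to search over all $k$ exponents independently: once $a_1,\ldots,a_{k-1}$ are fixed, the remaining task of finding $a_k$ with $x = (g_1^{a_1} \cdots g_{k-1}^{a_{k-1}}) g_k^{a_k}$ is exactly a shifted discrete log problem with $y \defeq g_1^{a_1} \cdots g_{k-1}^{a_{k-1}}$ and $g \defeq g_k$, solvable in $\poly(\log|S|)$ queries. So the cost is dominated by searching over the first $k-1$ exponents, of which there are $\Theta(n^{k-1}) = \Theta(|S|^{(k-1)/k})$ candidate tuples (since $|S| = \Theta(n^k)$). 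A naive search over all of these is too expensive; the trick is to use Grover search (or amplitude amplification) over this space, which gives a square-root speedup to $\sqrt{n^{k-1}} \cdot \poly(\log|S|) = |S|^{\frac{1}{2}-\frac{1}{2k}} \poly(\log|S|)$ queries, matching the lower bound up to logarithmic factors.

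First I would set up the search space explicitly as the set of tuples $(a_1,\ldots,a_{k-1}) \in \N_0^{k-1}$ with $a_1 + \cdots + a_{k-1} \le n$; this is a set of size $\Theta(n^{k-1})$, and we can produce a bound $n = O(|S|^{1/k})$ on the individual exponents using the index/period machinery of \sec{structure} (or simply the trivial bound that any exponent is at most $|S|$). For each candidate tuple I would define the predicate ``does there exist $a_k \in \N_0$ with $x = g_1^{a_1}\cdots g_{k-1}^{a_{k-1}} g_k^{a_k}$,'' and I would evaluate this predicate using \lem{sdlog}: set $y = g_1^{a_1}\cdots g_{k-1}^{a_{k-1}}$, attempt to solve the shifted discrete log $x = y\,g_k^{a_k}$, and accept the tuple iff a valid $a_k$ is returned. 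Each predicate evaluation costs $\poly(\log|S|)$ queries but $2^{O(\sqrt{\log|S|})}$ time via the Kuperberg sieve. Then I would run Grover search over the $\Theta(n^{k-1})$ tuples with this predicate as the marking oracle, obtaining the desired exponent vector.

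Two technical points need care, and I expect the reconciliation of the time bound with the query bound to be the main obstacle. For the \emph{query} bound, Grover search over $\Theta(n^{k-1})$ items with a $\poly(\log|S|)$-query predicate (using the polynomial-query version of \lem{sdlog}) gives $\sqrt{n^{k-1}}\,\poly(\log|S|) = |S|^{\frac{1}{2}-\frac{1}{2k}}\poly(\log|S|)$ queries, as claimed. For the \emph{time} bound, however, each predicate evaluation costs $2^{O(\sqrt{\log|S|})}$ time via Kuperberg's algorithm, so the total time is $\sqrt{n^{k-1}} \cdot 2^{O(\sqrt{\log|S|})}$; since $2^{O(\sqrt{\log|S|})} = |S|^{o(1)}$, this is $|S|^{\frac{1}{2}-\frac{1}{2k}+o(1)}$, matching the stated time bound. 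The subtlety is that Grover search requires the marking oracle to be a coherent unitary with bounded error, so I would need to run \lem{sdlog} reversibly and boost its success probability to be exponentially close to $1$ (by $\poly(\log|S|)$ repetitions) before amplifying, ensuring the soundness and completeness errors do not accumulate across the $\sqrt{n^{k-1}}$ Grover iterations. I would also verify that the boundary behavior of the semigroup (the collapse to $\neutral$ when exponent sums exceed the order) does not interfere with the shifted discrete log structure within the reachable cycle, which follows because we only query tuples within the valid exponent range.
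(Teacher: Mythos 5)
Your overall architecture --- a Grover search over $k-1$ of the exponents, with the shifted discrete log algorithm of \lem{sdlog} as the inner predicate recovering the remaining exponent --- is the same as the paper's, and your handling of the time-versus-query tradeoff (Kuperberg's sieve for the time bound, the query-efficient dihedral HSP algorithm for the query bound, with $2^{O(\sqrt{\log|S|})} = |S|^{o(1)}$) is fine. But there is a genuine gap in how you bound the search space. You take the domain to be $\{(a_1,\ldots,a_{k-1}) \in \N_0^{k-1} : a_1+\cdots+a_{k-1} \le n\}$ with $n = O(|S|^{1/k})$, justified by ``$|S| = \Theta(n^k)$.'' That relation holds only for the specific hard instance constructed in the proof of \thm{lower}; in \thm{loweropt}, $S$ is an \emph{arbitrary} black-box abelian semigroup, and individual exponents in the representation of $x$ can be as large as $|S|$. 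For instance, with $S = \Z_N$, $g_1 = 1$, $g_2 = 0$, and $x = N-1$, every valid solution has $a_1 \ge N-1 = |S|-1$, far outside any range of size $O(|S|^{1/2})$; so your algorithm (which always searches the first $k-1$ exponents over a small simplex and solves for $a_k$) finds nothing. Neither the index/period machinery of \sec{structure} nor any trivial bound gives $a_i = O(|S|^{1/k})$, and the trivial bound $a_i \le |S|$ leaves a search space of size $|S|^{k-1}$, hence a Grover cost of $|S|^{(k-1)/2}$, which is far too large.

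The missing ingredient is \lem{small}: the \emph{lexicographically first} solution satisfies $(a_1+1)\cdots(a_k+1) \le |S|$, whence for some index $j$ one has $\prod_{i \ne j}(a_i+1) \le |S|^{(k-1)/k}$. The paper therefore runs, for \emph{each} of the $k$ choices of the distinguished coordinate $j$, a Grover search over the hyperbolic region $\{(a_i)_{i\ne j} : \prod_{i\ne j}(a_i+1) \le |S|^{(k-1)/k}\}$ --- which contains only $|S|^{(k-1)/k}\poly(\log|S|)$ lattice points by a divisor-counting estimate --- and uses \lem{sdlog} with $y = \prod_{i\ne j} g_i^{a_i}$ and $g = g_j$ to recover the one possibly huge exponent $a_j$. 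Both modifications are essential: the product constraint (rather than a sum constraint with a fabricated $n$) is what keeps the domain small for a general $S$, and iterating over all $j$ is necessary because one cannot know in advance which exponent is the large one. With the search domain replaced in this way, the rest of your argument (reversible, amplified predicate evaluation inside Grover, and the time/query accounting) goes through as you describe.
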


To prove this, we use the following simple observation.

\begin{lemma}\label{lem:small}
Let $S$ be a finite abelian semigroup and let  $x,g_1,\ldots,g_k \in S$. Let $(a_1,\ldots,a_k)$ be the lexicographically first $k$-tuple from $\N_0^k$ such that $x=g_1^{a_1} \cdots g_k^{a_k}$. Then $(a_1+1)\cdots(a_k+1)\leq |S|$.
\end{lemma}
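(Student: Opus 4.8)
The plan is to exhibit an injection from a box of lattice points into $S$ whose size is exactly $(a_1+1)\cdots(a_k+1)$, so that the bound follows by counting. Concretely, I would introduce the box $B \defeq \{(b_1,\ldots,b_k) \in \N_0^k \colon 0 \le b_i \le a_i \text{ for all } i\}$, which contains exactly $(a_1+1)\cdots(a_k+1)$ points, together with the map $\phi\colon B \to S$ given by $\phi(b_1,\ldots,b_k) \defeq g_1^{b_1}\cdots g_k^{b_k}$. If $\phi$ is injective, then its image consists of $(a_1+1)\cdots(a_k+1)$ distinct elements of $S$, and the claim $(a_1+1)\cdots(a_k+1) \le |S|$ is immediate.

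The heart of the argument is proving injectivity, and here the lexicographic minimality of $(a_1,\ldots,a_k)$ does all the work. Suppose for contradiction that two distinct tuples $(b_1,\ldots,b_k)$ and $(c_1,\ldots,c_k)$ in $B$ give the same element, and assume without loss of generality that $(b_i)$ precedes $(c_i)$ lexicographically, so that $b_i = c_i$ for $i<j$ and $b_j < c_j$ at the first differing coordinate $j$. Since every $c_i \le a_i$, the exponents $a_i - c_i$ are nonnegative, so I may multiply the collision $g_1^{b_1}\cdots g_k^{b_k} = g_1^{c_1}\cdots g_k^{c_k}$ on both sides by $g_1^{a_1-c_1}\cdots g_k^{a_k-c_k}$. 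Using commutativity to regroup, the right-hand side collapses to $g_1^{a_1}\cdots g_k^{a_k} = x$, while the left-hand side becomes $g_1^{a_1+b_1-c_1}\cdots g_k^{a_k+b_k-c_k}$. Because $b_i,c_i\in\{0,\ldots,a_i\}$ we have $a_i' \defeq a_i + b_i - c_i \ge a_i - c_i \ge 0$, so this is a bona fide representation $x = g_1^{a_1'}\cdots g_k^{a_k'}$ in $\N_0^k$. Comparing with $(a_1,\ldots,a_k)$, we find $a_i' = a_i$ for $i<j$ and $a_j' = a_j + b_j - c_j < a_j$, so $(a_i')$ strictly precedes $(a_i)$ lexicographically, contradicting minimality.

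The step I expect to be the crux is exactly this sign control: the entire reduction rests on the fact that membership in the box forces $0 \le c_i \le a_i$, which is what makes both the multiplier $g_1^{a_1-c_1}\cdots g_k^{a_k-c_k}$ and the resulting exponents $a_i'$ nonnegative, so that an abstract collision becomes a concrete lexicographically smaller witness. Commutativity enters only to rearrange the products into the standard ordered form. The one bookkeeping point to be careful about is the all-zero corner of $B$: the empty product is only meaningful when it names an element of $S$ (e.g.\ when $S$ is a monoid), and under that convention the counting above gives the stated bound directly, with $\phi(0)$ playing the role of the identity. I would handle this as a short preliminary remark rather than as a separate case, since every other tuple in $B$ yields a genuine element of $S$ and the injectivity argument above applies verbatim.
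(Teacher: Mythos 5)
Your proof is correct and is essentially the paper's argument in contrapositive form: the paper assumes $(a_1+1)\cdots(a_k+1) > |S|$ and uses the pigeonhole principle to produce exactly the collision that you rule out by proving injectivity of the box map, and the key step---multiplying a collision by the complementary exponents $g_1^{a_1-c_1}\cdots g_k^{a_k-c_k}$ to manufacture a lexicographically smaller representation of $x$---is identical. Your aside about the all-zero corner of the box flags a genuine (but harmless) subtlety that the paper's pigeonhole count also glosses over.
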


\begin{proof}
Assume for a contradiction that $(a_1+1)\cdots(a_k+1) > |S|$. Then, by the pigeonhole principle, there must exist $c_1,\ldots,c_k,d_1,\ldots,d_k \in \N_0$ with $c_i,d_i\leq a_i$ (for all $i=1,\ldots,k$) such that $g_1^{c_1} \cdots g_k^{c_k} = g_1^{d_1} \cdots g_k^{d_k}$ and $(c_1,\ldots,c_k) \ne (d_1,\ldots,d_k)$. Suppose without loss of generality that $(c_1,\ldots,c_k)$ is lexicographically smaller than $(d_1,\ldots,d_k)$. Let $b_i \defeq a_i + c_i - d_i$ for all $i$, and note that $a_i - d_i \geq 0$. Thus $g_1^{a_1} \cdots g_k^{a_k} = g_1^{d_1} \cdots g_k^{d_k} g_1^{a_1 - d_1} \cdots g_k^{a_k - d_k}$ and $g_1^{b_1} \cdots g_k^{b_k} = g_1^{c_1} \cdots g_k^{c_k} g_1^{a_1 - d_1} \cdots g_k^{a_k - d_k}$. This implies $g_1^{b_1} \cdots g_k^{b_k} = x$. Also, for the first index $i$ with $c_i\neq d_i$, we have $c_i<d_i$. Therefore $(b_1,\ldots,b_k)$ is lexicographically smaller than $(a_1,\ldots,a_k)$, a contradiction.
\end{proof}

\begin{proof}[Proof of \thm{loweropt}]
By \lem{small}, there is some $j \in \{1,\ldots,k\}$ such that $\prod_{i \ne j} (a_i+1) \le |S|^{(k-1)/k}$.  To see this, note that $\prod_{j=1}^k \prod_{i \ne j} (a_i+1) = \big(\prod_{j=1}^k (a_j+1)\big)^{k-1} \le |S|^{k-1}$. Thus, for each $j \in \{1,\ldots,k\}$, we perform a Grover search \cite{Gro97} over $(a_1,\ldots,a_{j-1},a_{j+1},\ldots,a_k)\in \N^{k-1}$ with $\prod_{i \ne j} (a_i+1) \le |S|^{(k-1)/k}$, where for each $(k-1)$-tuple we use \lem{sdlog} (with $y=\prod_{i \ne j} g_i^{a_i}$ and $g=g_j$) to find $a_j$ such that $x=g_1^{a_1} \cdots g_k^{a_k}$ (or to exclude its existence). The running time of this procedure is $k |S|^{\frac{k-1}{2k}+o(1)} = |S|^{\frac{1}{2} - \frac{1}{2k}+o(1)}$. Using the query-efficient (but not time-efficient) algorithm for the dihedral hidden subgroup problem in place of Kuperberg's algorithm, we require only $|S|^{\frac{1}{2} - \frac{1}{2k}} \poly(\log |S|)$ queries.
\end{proof}

While \thm{lower} shows that the constructive membership problem is provably hard in black-box semigroups, the problem is also known to be NP-hard in explicit semigroups.  In particular, Beaudry proved NP-completeness of membership testing in abelian semigroups of transformations of (small) finite sets \cite{Bea88}.

%%%%%%%%%%%%%%%%%%%%%%%%%%%%%%%%%%%%%%%%%%%%%%%%%%%%%%%%%%%%%%%%%%%%%%%%%%%%%%
\section{Discussion}

We have considered quantum algorithms for the semigroup discrete log problem and some natural generalizations thereof.  While discrete logs can be computed efficiently by a quantum computer even in semigroups, the shifted semigroup discrete log problem appears comparable in difficulty to the dihedral hidden subgroup problem, and the constructive membership problem in a black-box semigroup with respect to multiple generators is provably hard.  Thus, while hardness of the discrete log problem in semigroups is not a good assumption for quantum-resistant cryptography, one might build quantum-resistant cryptosystems based on the presumed hardness of other problems in semigroups.

Testing membership in abelian semigroups is related to a cryptographic problem known as the semigroup action problem (SAP) \cite{MMR07}. Given an (abelian) semigroup $S$ acting on a set $M$ and two elements $x,y\in M$, the SAP asks one to find an element $s\in S$ such that $x=sy$. Constructive membership testing in a monoid (i.e., a semigroup with an identity element, which can be adjoined artificially if necessary) is an instance of SAP: consider $S$ acting on itself by multiplication and let $y$ be the identity. (More precisely, to obtain a decomposition with respect to generators $g_1,\ldots,g_k$, consider the natural action of $\<g_1\> \times \cdots \times \<g_k\>$ on $S$.) On the other hand, the SAP over an abelian semigroup can be reduced to membership of $x$ in a subsemigroup generated by $y$ and $S$ of the abelian semigroup $S'=S\cup M\cup \{0\}$ with a semigroup operation that naturally extends the multiplication of $S$ and the action of $S$ on $M$. In particular, the SAP for a cyclic semigroup action reduces to an instance of the shifted discrete log problem discussed in \sec{sdlog}.

A natural open question raised by our work is the quantum complexity of the shifted semigroup discrete log problem: is this task indeed as hard as the DHSP, or is there a faster algorithm using additional structure?  In general, it might also be interesting to develop new quantum-resistant cryptographic primitives based on hard semigroup problems.

%%%%%%%%%%%%%%%%%%%%%%%%%%%%%%%%%%%%%%%%%%%%%%%%%%%%%%%%%%%%%%%%%%%%%%%%%%%%%%
\section*{Acknowledgments}

We thank Rainer Steinwandt for suggesting the problem of computing discrete logarithms in semigroups and for helpful references.
We thank Robin Kothari for pointing out that the lower bound of \thm{lower} generalizes from $k=2$ to $k>2$.
We also thank the Dagstuhl research center and the organizers of its 2013 seminar on Quantum Cryptanalysis, where this work was started.

AMC received support from NSERC, the Ontario Ministry of Research and Innovation, and the US ARO/DTO.
GI received support from the Hungarian Research Fund (OTKA) and from the Centre for Quantum Technologies at the National University of Singapore.

%%%%%%%%%%%%%%%%%%%%%%%%%%%%%%%%%%%%%%%%%%%%%%%%%%%%%%%%%%%%%%%%%%%%%%%%%%%%%%

\end{document}